\begin{document}
%
\title{Summary of Topological Study of Chaotic CBC Mode of Operation}

\author{\IEEEauthorblockN{Abdessalem Abidi}
\IEEEauthorblockA{Electronics and Microelectronics Lab.\\
Faculty of Sciences of Monastir\\
University of Monastir, Tunisia\\
Email: abdessalemabidi9@gmail.com}

\and
\IEEEauthorblockN{Samar Tawbi}
\IEEEauthorblockA{Faculty of Science\\
Lebanese University, Beirut\\
Lebanon}
\and
\IEEEauthorblockN{Christophe Guyeux}
\IEEEauthorblockA{FEMTO-ST Institute, UMR 6174 CNRS,\\ University of Franche-Comt\'e, France}
\and
\IEEEauthorblockN{Belgacem Bouall\`egue\\ and Mohsen Machhout}
\IEEEauthorblockA{Electronics and Microelectronics Lab.\\
Faculty of Sciences of Monastir\\
University of Monastir, Tunisia}}


%


\maketitle

\begin{abstract}
In cryptography, block ciphers are the most fundamental elements in many symmetric-key encryption systems. The Cipher Block Chaining, denoted CBC, presents one of the most famous mode of operation that uses a block cipher to provide confidentiality or authenticity. In this research work, we intend to summarize our results that have been detailed in our previous series of articles. The goal of this series has been to obtain a complete topological study of the CBC block cipher mode of operation after proving his chaotic behavior according to the reputed definition of Devaney.
\end{abstract}


%
\IEEEpeerreviewmaketitle

\section{Introduction}

Cryptography is the mathematics of secret codes. Block ciphers are the most fundamental elements in many symmetric-key encryption systems, it means that both encryption and decryption utilize the same key. Block cipher algorithm divides the message into separate blocks of fixed size, and encrypts/decrypts each block individually. The encrypted block and the original one have the same size. 
Block ciphers provide confidentiality that is used in a large variety of applications, such as: protection of the secrecy of login passwords, email messages, video transmissions and many other applications.

At the programming level, it is not sufficient to put anyhow a block cipher algorithm. In fact, this latter can be used in various ways depending on their specific needs. These ways are called the block cipher modes of operation. There are several modes and each one of them possesses own characteristics in addition to its specific security properties. In this paper, the mode on which we will focus is the cipher block chaining one, and we will study it according to chaos.

The chaos theory that we consider in our various research works is the Devaney's topological one~\cite{Devaney}. In addition to being recognized as one of the best mathematical definition of chaos, this theory offers a framework with qualitative and quantitative tools to evaluate the notion of unpredictability~\cite{bahi2011efficient}.
As an application of our fundamental results, we are interested in the area of information safety and security, and more precisely we focus on symmetric-key encryption systems.
In this research work, which is a summary of our previous papers, we intend to preview the different results, which have been detailed respectively in ~\cite{Abdessalem2016}, ~\cite{abidi:hal-01264170}, and ~\cite{abidi:hal-01312476}. These results allowed us to deepen a complete topological study of the CBC mode of operation after proving its chaotic behavior according to Devaney.

The remainder of this research work is organized as follows. In Section~\ref{section:BASIC RECALLS}, we will recall some basic definitions concerning chaos and  CBC mode of operation.
Section~\ref{sec:proof} is devoted to sum up the various demonstrations that allowed us to proof the chaotic behaviour of the CBC mode. Moreover, new cases will be studied here. In Section~\ref{section:Quantitative measures}, quantitative topological properties for chaotic CBC mode of operation will be summarized, while 
Section~\ref{top mixing} resumes the main contribution that has been detailed in our last article ~\cite{abidi:hal-01312476}. This research work ends by a conclusion section in which our contribution is recalled and some intended future work are proposed.

\section{Basic Recalls}
\label{section:BASIC RECALLS}
\subsection{Devaney's Chaotic Dynamical Systems}
\label{subsec:Devaney}
In the remainder of this paper, $m_{n}$ denotes the $n^{th}$ block message of a sequence $S$ while $m^{j}$ stands for the $j-th$ bit of integer of the block message $m\in \llbracket 0, 2^\mathsf{N}-1 \rrbracket$, expressed in the binary numeral system and $x_{i}$ stands for the $i^{th}$ component of a vector $x$. 

$\mathcal{X}^\mathds{N}$ is the set of all sequences whose elements belong to $\mathcal{X}$.

$f^{\circ k}=f\circ ...\circ f$ is for the $k^{th}$ composition of a function $f$.
$\mathds{N}$ is the set of natural (non-negative) numbers, while $\mathds{N}^*$ stands for the positive integers $1, 2, 3, \hdots$ 

Finally, the following
notation is used: $\llbracket1;N\rrbracket=\{1,2,\hdots,N\}$.

Consider a topological space $(\mathcal{X},\tau)$, where $\tau$ represents a family of subsets of $\mathcal{X}$, and a continuous function $f :
\mathcal{X} \rightarrow \mathcal{X}$ on $(\mathcal{X},\tau)$ ~\cite{abidi:hal-01312476}.

\begin{definition}
The function $f$ is \emph{topologically transitive} if, for any pair of nonempty open sets
$\mathcal{U},\mathcal{V} \subset \mathcal{X}$, there exists an integer $k>0$ such that $f^{\circ k}(\mathcal{U}) \cap \mathcal{V} \neq
\varnothing$.
\end{definition}

\begin{definition}
An element $x$ is a \emph{periodic point} for $f$ of period $n\in \mathds{N}$, $n>1$,
if $f^{\circ n}(x)=x$ and $f^{\circ k}(x) \neq x, 1\le k\le n$. 
\end{definition}

\begin{definition}
$f$ is  \emph{regular} on $(\mathcal{X}, \tau)$ if the set of periodic
points for $f$ is dense in $\mathcal{X}$: for any point $x$ in $\mathcal{X}$,
any neighborhood of $x$ contains at least one periodic point.
\end{definition}

\begin{definition}
\label{sensitivity} The function $f$ has \emph{sensitive dependence on initial conditions} on the metric space $(\mathcal{X},d)$
if there exists $\delta >0$ such that, for any $x\in \mathcal{X}$ and any
neighborhood $\mathcal{V}$ of $x$, there exist $y\in \mathcal{V}$ and $n > 0$ such that
the distance $d$ between the results of their $n^{th}$ composition, $f^{\circ n}(x)$ and $f^{\circ n}(y)$, is greater than $\delta$:
$$d\left(f^{\circ n}(x), f^{\circ n}(y)\right) >\delta .$$
$\delta$ is called the \emph{constant of sensitivity} of $f$.
\end{definition}

\begin{definition}[Devaney's formulation of chaos~\cite{Devaney}]\label{def:dev}
The function $f$ is  \emph{chaotic} on a metric space $(\mathcal{X},d)$ if $f$ is regular,
topologically transitive, and has sensitive dependence on initial conditions.
\end{definition}

Banks \emph{et al.} have proven in~\cite{Banks92} that when $f$ is regular and transitive on a metric space $(\mathcal{X}, d)$, then $f$ has the property of sensitive dependence on initial conditions. This is why chaos can be formulated too in a topological space $(\mathcal{X}, \tau)$: in that situation, chaos is obtained when $f$ is regular and
topologically transitive.
Note that the transitivity property is often obtained as a consequence of the strong transitivity one, which is defined below.

\begin{definition}
\label{def:strongTrans}
$f$ is \emph{strongly transitive} on $(\mathcal{X},d)$ if, for all point $x,y \in \mathcal{X}$ and for all neighborhood $\mathcal{V}$ of $x$, it exists $n \in \mathds{N}$ and $x'\in \mathcal{V}$ such that $f^n(x')=y$. 
\end{definition}

Finally, a function \emph{f} has a constant of \emph{expansivity} equal to $\varepsilon $ if an arbitrarily small error on any initial condition is \emph{always} magnified until $\varepsilon $~\cite{gb11:bc}. Mathematically speaking,

\begin{definition}\label{def:expan}
  The function $f$  is said to have the property of \emph{expansivity}
 if $\exists \varepsilon >0,$ $\forall x \neq y,$ $\exists n \in \mathbb{N},$ $d(f^{n}(x),f^{n}(y)) \geqslant \varepsilon$.
 
Then, $\varepsilon $ is the \emph{constant of expansivity} of \emph{f}. We also say that \emph{f} is $\varepsilon$-expansive.
\end{definition}

\begin{definition} 
\label{def:topMixing}
A discrete dynamical system is said \emph{topologically mixing} if and only if, for any couple of disjoint open set $\mathcal{U},\mathcal{V} \neq \varnothing$, there exists an integer $n_0\in \mathds{N}$ such that, for all $n > n_0$, $f^{\circ n}(\mathcal{U}) \cap \mathcal{V} \neq
\varnothing$.
\end{definition}

\subsection{CBC mode characteristics}
In cryptography, Cipher Block Chaining is a block cipher mode that provides confidentiality but not message integrity. Similar to some other modes, the input to the encryption processes of the CBC mode includes not only a plaintext, but also a data block called the initialization vector, which is denoted IV.

 In particular, the CBC mode offers a solution to most of the problems presented by the Electronic Code Book (ECB). Indeed, thanks to this mode, the encryption will depend not only on the plaintext, but also on all preceding blocks. More precisely, each block of plaintext is XORed immediately with the previous cipher text block before being encrypted (\textit{i.e.}, the binary operator XOR is applied between two stated blocks). 

 For the first block, the initialization vector acts as the previous cipher text block. For decryption, we proceed in the same way, but this time, we start from the encrypted text to obtain the original one using now the decryption algorithm instead of the encryption function, see Figure 1.
 \begin{figure}[!ht]
    \centering
 \subfigure[CBC encryption mode]{\label{fig:CBCenc}
        \includegraphics[scale=0.5]{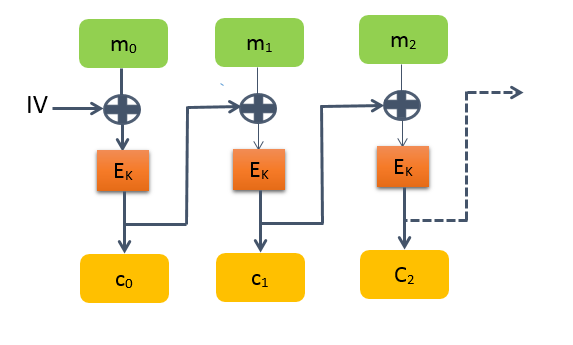}}     \subfigure[CBC decryption mode]{\includegraphics[scale=0.5]{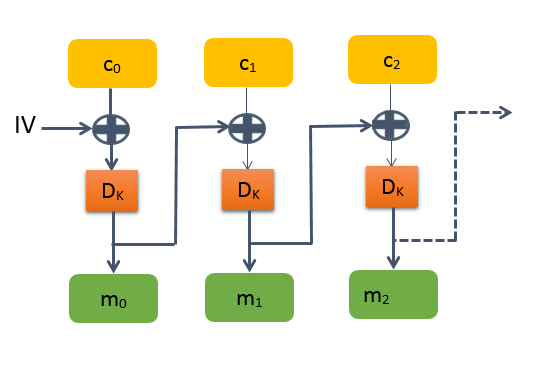}}
    \caption{CBC mode of operation}
     \label{fig:CBC}
\end{figure}

 The IV needs not to be secret; however, for any particular execution of the encryption process, it must be unpredictable. In the next section, we will summarize the results of our first article ~\cite{Abdessalem2016}.
\section{Proving Chaotic behavior of CBC mode}
\label{sec:proof}
In this section, we prove that CBC mode of operation behaves as Devaney's topological chaos if the iteration function used is the vectorial Boolean negation. This function has been chosen here, but the process remains general and other iterate functions g can be used. The sole condition is to prove that $G_g$ satisfies the Devaney's chaos property. To do this, we have began by modeling CBC mode as a dynamical system.
 
\subsection{Modeling CBC as dynamical system}
Our modeling follows a same canvas than what has been done for hash functions~\cite{bg10:ij,gb11:bc} or pseudorandom number generation~\cite{bfgw11:ij}.
Let us consider the CBC mode of operation with a keyed encryption function $\varepsilon_k:\mathds{B}^\mathsf{N} \rightarrow \mathds{B}^\mathsf{N} $ depending on a secret key $k$, where $\mathsf{N}$ is the size for the block cipher, and $\mathcal{D}_k:\mathds{B}^\mathsf{N} \rightarrow \mathds{B}^\mathsf{N} $ is the associated  decryption function, which is such that $\forall k, \varepsilon_k \circ \mathcal{D}_k$ is the identity function. We define  
the Cartesian product $\mathcal{X}=\mathds{B}^\mathsf{N}\times\mathcal{S}_\mathsf{N}$, where:
\begin{itemize}
\item $\mathds{B} = \{0,1\}$ is the set of Boolean values,
\item $\mathcal{S}_\mathsf{N} = \llbracket 0, 2^\mathsf{N}-1\rrbracket^\mathds{N}$, the set of infinite sequences of natural integers bounded by $2^\mathsf{N}-1$, or the set of infinite $\mathsf{N}$-bits block messages,
\end{itemize}
in such a way that $\mathcal{X}$ is constituted by couples of internal states of the mode of operation together with sequences of block messages.
Let us consider the initial function:
$$\begin{array}{cccc}
 i:& \mathcal{S}_\mathsf{N} & \longrightarrow & \llbracket 0, 2^\mathsf{N}-1 \rrbracket \\
 & (m^i)_{i \in \mathds{N}} & \longmapsto & m^0
\end{array}$$
that returns the first block of a (infinite) message, and the shift function:
$$\begin{array}{cccc}
 \sigma:& \mathcal{S}_\mathsf{N} & \longrightarrow & \mathcal{S}_\mathsf{N} \\
 & (m^0, m^1, m^2, ...) & \longmapsto & (m^1, m^2, m^3, ...)
\end{array}$$
which removes the first block of a message. Let $m_j$ be the $j$-th bit of integer, or block message, $m\in \llbracket 0, 2^\mathsf{N}-1 \rrbracket$, expressed in the binary numeral system, and when counting from the left. We define:
$$\begin{array}{cccc}
F_f:& \mathds{B}^\mathsf{N}\times \llbracket 0, 2^\mathsf{N}-1 \rrbracket & \longrightarrow & \mathds{B}^\mathsf{N}\\
 & (x,m) & \longmapsto & \left(x_j \overline{m_j} + f(x)_j {m_j}\right)_{j=1..\mathsf{N}} 
\end{array}$$
This function returns the inputted binary vector $x$, whose $m_j$-th components $x_{m_j}$ have been replaced by $f(x)_{m_j}$, for all $j=1..\mathsf{N}$ such that $m_j=1$. In case where $f$ is the vectorial negation, this function will correspond to one XOR between the clair text and the previous encrypted state.  
So the CBC mode of operation can be rewritten as the following dynamical system:
\begin{equation}
\label{eq:sysdyn}
\left\{
\begin{array}{ll}
X^0 = & (IV,m)\\
X^{n+1} = & \left(\mathcal{E}_k \circ F_{f_0} \left( i(X_1^n), X_2^n\right), \sigma (X_1^n)\right)
\end{array}
\right.
\end{equation}
For any given $g:\llbracket 0, 2^\mathsf{N}-1\rrbracket \times \mathds{B}^\mathsf{N} \longrightarrow \mathds{B}^\mathsf{N}$, we denote $G_g(X) = \left(g(i(X_1),X_2);\sigma (X_1)\right)$ (when $g = \mathcal{E}_k\circ F_{f_0}$, we obtain one cypher block of the CBC, as depicted in Figure~\ref{fig:CBC}). So the recurrent relation of Eq.\eqref{eq:sysdyn} can be rewritten in a condensed way, as follows.
\begin{equation}
X^{n+1} = G_{\mathcal{E}_k\circ F_{f_0}} \left(X^n\right) .
\end{equation}
With such a rewriting, one iterate of the discrete dynamical system above corresponds exactly to one cypher block in the CBC mode of operation. Note that the second component of this system is a subshift of finite type, which is related to the symbolic dynamical systems known for their relation with chaos~\cite{lind1995introduction}.
We now define a distance on $\mathcal{X}$ as follows: $d((x,m);(\check{x},\check{m})) = d_e(x,\check{x})+d_m(m,\check{m})$, where:
$$\left\{\begin{array}{ll}
d_e(x,\check{x})  & = \sum_{k=1}^\mathsf{N} \delta (x_k,\check{x}_k)  \\
&\\
d_m(m,\check{m}) & = \displaystyle{\dfrac{9}{\mathsf{N}} \sum_{k=1}^\infty \dfrac{\sum_{i=1}^\mathsf{N} \left|m_i - \check{m}_i\right|}{10^k}} .
\end{array}\right.$$
This distance has been introduced to satisfy the following requirements:
\begin{itemize}
\item The integral part between two points $X,Y$ of the phase space $\mathcal{X}$ corresponds to the number of binary components that are different between the two internal states $X_1$ and $Y_1$.
\item The $k$-th digit in the decimal part of the distance between $X$ and $Y$ is equal to 0 if and only if the $k$-th blocks of messages $X_2$ and $Y_2$ are equal. This desire is at the origin of the normalization factor $\dfrac{9}{\mathsf{N}}$.
\end{itemize}

\subsection{Proof of chaos}
As mentioned in Definition~\ref{def:dev},  a function $f$ is  \emph{chaotic} on $(\mathcal{X},\tau)$ if $f$ is regular and
topologically transitive. 
We have began in~\cite{Abdessalem2016} by stating some propositions that are primarily required in order to proof the chaotic behavior of the CBC mode of operation.

\begin{proposition}
\label{prop:transitivity}
Let $g=\mathcal{E}_{\kappa} \circ F_{f_0}$, where $\mathcal{E}_{\kappa}$ is a given keyed block cipher and $f_0:\mathds{B}^\mathsf{N} \longrightarrow \mathds{B}^\mathsf{N}$, $(x_1,...,x_\mathsf{N}) \longmapsto (\overline{x_1},...,\overline{x_\mathsf{N}})$ is the Boolean vectorial negation.
We consider the directed graph $\mathcal{G}_g$, where:
\begin{itemize}
\item vertices are all the $\mathsf{N}$-bit words.
\item there is an edge $m \in \llbracket 0, 2^{\mathsf{N}}-1 \rrbracket$ from $x$ to $\check{x}$ if and only if $g(m,x)=\check{x}$.
\end{itemize}
If $\mathcal{G}_g$ is strongly connected, then $G_g$ is strongly transitive.
\end{proposition}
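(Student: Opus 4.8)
The plan is to convert the purely combinatorial hypothesis—strong connectivity of $\mathcal{G}_g$—into the dynamical conclusion of Definition~\ref{def:strongTrans} by reading off the explicit action of iterating $G_g$. First I would unravel that action. Writing a point as $X=(x,m)$ with state $x\in\mathds{B}^\mathsf{N}$ and message sequence $m=(m^0,m^1,\ldots)\in\mathcal{S}_\mathsf{N}$, one iterate consumes the head block $m^0$, moves the internal state from $x$ to $g(m^0,x)$, and shifts the sequence via $\sigma$, so that $G_g(X)=\bigl(g(m^0,x),\sigma(m)\bigr)$. Iterating, $G_g^{\circ n}(X)=\bigl(x^{(n)},(m^n,m^{n+1},\ldots)\bigr)$, where $x^{(0)}=x$ and $x^{(t+1)}=g(m^t,x^{(t)})$. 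Equivalently, $x^{(n)}$ is the endpoint of the directed walk in $\mathcal{G}_g$ that starts at the vertex $x$ and follows the edges labelled $m^0,m^1,\ldots,m^{n-1}$. This is the observation that makes the graph relevant: evolving the state under $G_g$ is precisely walking in $\mathcal{G}_g$ along the message blocks, which are free parameters.

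Next I would pin down what a neighborhood imposes. Given $X$ and a neighborhood $\mathcal{V}$ of $X$, choose $\varepsilon>0$ with the open ball of radius $\varepsilon$ around $X$ contained in $\mathcal{V}$. From the definition of $d$, a point within distance $\varepsilon<1$ of $X$ must (since the integer part of $d$ equals $d_e$, the number of differing state bits) have internal state exactly equal to $x$, and must agree with $X$ on the first $k_0$ message blocks, where $k_0$ depends only on $\varepsilon$ through the $10^{-k}$ weighting in $d_m$. Hence every $X'$ of the form $\bigl(x,(m^0,\ldots,m^{k_0-1},\ast,\ast,\ldots)\bigr)$ lies in $\mathcal{V}$, and all message blocks of index $\geq k_0$ are ours to choose.

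Then I would construct the required preimage. Fix a target $Y=(y,(p^0,p^1,\ldots))$. Following the forced prefix $m^0,\ldots,m^{k_0-1}$ from $x$ lands the state at some vertex $z:=x^{(k_0)}$ of $\mathcal{G}_g$. By strong connectivity there is a directed walk in $\mathcal{G}_g$ from $z$ to $y$; let $r$ be its length and $q^{k_0},\ldots,q^{k_0+r-1}$ its edge labels. Put $n:=k_0+r$ and define
\[
X' = \bigl(x,\;(m^0,\ldots,m^{k_0-1},\,q^{k_0},\ldots,q^{n-1},\,p^0,p^1,p^2,\ldots)\bigr).
\]
By construction $X'\in\mathcal{V}$, and applying the description of $G_g^{\circ n}$ above the state walks $x\to z\to y$, giving $x^{(n)}=y$, while the $n$-fold shift of the message sequence is exactly $(p^0,p^1,\ldots)$ because the target suffix was placed starting at index $n$. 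Therefore $G_g^{\circ n}(X')=\bigl(y,(p^0,p^1,\ldots)\bigr)=Y$, which is the strong transitivity requirement. The degenerate case $z=y$ is covered by taking $r=0$, permissible since $0\in\mathds{N}$.

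The main obstacle is the bookkeeping of the second paragraph: one must argue carefully that the metric forces the internal state to coincide \emph{exactly} and constrains only \emph{finitely many} leading message blocks, so that the entire tail of the sequence remains free. Once this is secured, the heart of the proof—invoking strong connectivity to supply the connecting walk from $z$ to $y$—is immediate, and the only remaining care is the alignment of indices, namely placing the target suffix $p$ precisely at position $n$ so that $\sigma^{\circ n}$ reproduces it.
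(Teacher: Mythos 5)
Your proposal is correct and takes essentially the same route as the paper's own argument: the present summary states Proposition~\ref{prop:transitivity} with the proof deferred to~\cite{Abdessalem2016}, where strong transitivity is established exactly as you do it — the metric forces the internal state to coincide (since $d<1$ gives $d_e=0$ because $d_e$ is an integer and $d_m\leqslant 1$) and constrains only a finite prefix of $k_0$ message blocks, after which strong connectivity of $\mathcal{G}_g$ supplies edge labels steering the state walk from $x^{(k_0)}$ to $y$, with the target suffix $(p^0,p^1,\ldots)$ planted at index $n=k_0+r$ so that $G_g^{\circ n}(X')=Y$ exactly. Your $k_0$ device is the same one the paper reuses verbatim in its sensitivity proof ($k_0=\lfloor -\log_{10}(\delta)\rfloor+1$), and your handling of the degenerate case $z=y$ via the empty walk is consistent with the paper's convention that $0\in\mathds{N}$.
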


We have then proven that,
\begin{proposition}
\label{prop:regularity}
If $\mathcal{G}_g$ is strongly connected, then $G_g$ is regular.
\end{proposition}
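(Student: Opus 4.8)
The plan is to establish directly that the set of periodic points of $G_g$ is dense in $(\mathcal{X},d)$. Fix an arbitrary point $X=(x,(m^0,m^1,m^2,\ldots))\in\mathcal{X}$ and an arbitrary $\varepsilon>0$; I must produce a periodic point $Y$ of $G_g$ with $d(X,Y)<\varepsilon$. First I would read off from the metric what "being $\varepsilon$-close" forces. Since $d=d_e+d_m$ and $d_e$ takes integer values (it counts the differing bits of the two internal states), any $Y$ with $d(X,Y)<1$ must share the \emph{exact} internal state $x$. For the message component, the construction of $d_m$ guarantees that if the two message sequences coincide on their first $k_0$ blocks, then only the terms with $k>k_0$ survive, whence $d_m\le \tfrac{9}{\mathsf{N}}\sum_{k=k_0+1}^{\infty}\tfrac{\mathsf{N}}{10^{k}}=10^{-k_0}$. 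I would therefore fix $k_0\in\mathds{N}$ with $10^{-k_0}<\varepsilon$ and aim to build a $Y$ that agrees with $X$ on the internal state and on the blocks $m^0,\ldots,m^{k_0-1}$.

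The heart of the argument is a walk on $\mathcal{G}_g$. Reading the first $k_0$ blocks of $X$ drives the internal state along the path $x=x_0 \xrightarrow{m^0} x_1 \xrightarrow{m^1}\cdots\xrightarrow{m^{k_0-1}} x_{k_0}$, where $x_{i+1}=g(m^i,x_i)$, so that $x_{k_0}$ is the state reached after $k_0$ cipher blocks. Here I would invoke the hypothesis: because $\mathcal{G}_g$ is strongly connected, there is a directed path in $\mathcal{G}_g$ from the vertex $x_{k_0}$ back to the vertex $x_0=x$, that is, a finite sequence of blocks $\mu^0,\ldots,\mu^{\ell-1}$ for which the successive applications of $g(\mu^{j},\cdot)$ steer $x_{k_0}$ back to $x$. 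I then let $w=(m^0,\ldots,m^{k_0-1},\mu^0,\ldots,\mu^{\ell-1})$ be the resulting word of length $p=k_0+\ell$, and I define $Y=(x,\widetilde m)$, where $\widetilde m$ is the infinite periodic repetition of $w$.

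It remains to verify the two required properties of $Y$, both of which are routine. For periodicity, after $p$ iterations the shift has consumed exactly one period of $\widetilde m$ and restored it, while the internal state has travelled around the closed walk $x=x_0\to\cdots\to x_{k_0}\to\cdots\to x$ and is back at $x$; hence $G_g^{\circ p}(Y)=Y$, so $Y$ is a periodic point (and, choosing $\ell\ge 1$, one has $p>1$ as the definition requires). For proximity, $Y$ and $X$ share the same internal state, giving $d_e=0$, and their message sequences agree on $m^0,\ldots,m^{k_0-1}$, giving $d_m\le 10^{-k_0}<\varepsilon$; therefore $d(X,Y)<\varepsilon$. Since $X$ and $\varepsilon$ were arbitrary, the periodic points are dense and $G_g$ is regular.

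The only non-bookkeeping step is the existence of the return path $\mu^0,\ldots,\mu^{\ell-1}$, and this is precisely where the strong connectivity of $\mathcal{G}_g$ is indispensable: it is what allows the prescribed prefix $m^0,\ldots,m^{k_0-1}$ to be completed into a cycle based at $x$, the appended blocks lying beyond position $k_0$ and so perturbing neither the fixed internal state nor the fixed leading blocks. I expect this to be the main (indeed the only) obstacle; the remainder is the same shift-type density argument already used in the companion hash-function and pseudorandom-generator constructions.
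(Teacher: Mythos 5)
Your proof is correct and follows essentially the same route as the paper's (given in the companion article~\cite{Abdessalem2016}): pick $k_0$ with $10^{-k_0}<\varepsilon$, keep the internal state and the first $k_0$ blocks, use strong connectivity of $\mathcal{G}_g$ to append a return path to the starting vertex, and periodize the resulting word to obtain a nearby periodic point. Your preliminary observations about the metric (integer-valued $d_e$ forcing equal internal states, and the $10^{-k_0}$ bound from agreeing prefixes) and the remark that $\ell\geqslant 1$ can always be taken are exactly the right bookkeeping, so nothing is missing.
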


According to Propositions~\ref{prop:transitivity} and~\ref{prop:regularity}, we can conclude that, depending on $g$, if the directed graph $\mathcal{G}_g$ is strongly connected, then the CBC mode of operation is chaotic according to Devaney, as established in our previous research work~\cite{Abdessalem2016}.

In this article and for illustration purpose, we have also given some examples of keyed block ciphers, which can be used by the CBC mode, and which can lead to a chaotic behavior for this mode when they have a strongly connected directed graph. These examples are taken from so-called transposition cipher methods. Among these examples, we considered the Caesar shift one.

\subsection{Caesar shift case}

The Caesar shift case is considered as one of the simplest and most widely known substitution cipher. 
In this method, each symbol in the plaintext is replaced by a symbol some fixed number of positions down the given alphabet. Translated in the $\mathsf{N}$ binary digits set of integers, this cypher can be written as follows:
$$
\begin{array}{cccc}
\mathcal{E}_k(x):& \llbracket 0, 2^\mathsf{N}-1 \rrbracket& \longrightarrow &\llbracket 0, 2^\mathsf{N}-1 \rrbracket\\
& x & \longmapsto & x+k \mod 2^\mathsf{N} \end{array}
$$

$$
\begin{array}{cccc}
\mathcal{D}_k(x):& \llbracket 0, 2^\mathsf{N}-1 \rrbracket& \longrightarrow &\llbracket 0, 2^\mathsf{N}-1 \rrbracket\\
& x & \longmapsto & x-k \mod 2^\mathsf{N} \end{array}
$$
\noindent where $k$ is the shift value acting as secret key. We will now show through examples that the CBC mode of operation embedding the Caesar shift can behave either chaotically or not, depending on $k$ and $\mathsf{N}$.
The two following tables~\ref{tab:Caesar} and~\ref{tab:Caesar1}  contain the $g(x,m)$ values for a shift of 1 and 2 respectively, in Caesar cipher over $3$-bit blocks. 
\begin{table}[]
\centering
\begin{tabular}{cc}
\begin{tabular}{c|c|c|c}
$x$ & $m$ & $F_{f_0}(x,m)$ & $g(m,x) = \mathcal{E}_k \circ F_{f_0}(x,m)$ \\
\hline
0 (0,0,0) & 0 (0,0,0) & 0 (0,0,0) & 1 \\
0 (0,0,0) & 1 (0,0,1) & 1 (0,0,1) & 2 \\
0 (0,0,0) & 2 (0,1,0) & 2 (0,1,0) & 3 \\
0 (0,0,0) & 3 (0,1,1) & 3 (0,1,1) & 4 \\
0 (0,0,0) & 4 (1,0,0) & 4 (1,0,0) & 5 \\
0 (0,0,0) & 5 (1,0,1) & 5 (1,0,1) & 6 \\
0 (0,0,0) & 6 (1,1,0) & 6 (1,1,0) & 7 \\
0 (0,0,0) & 7 (1,1,1) & 7 (1,1,1) & 0 \\
\hline
1 (0,0,1) & 0 (0,0,0) & 1 (0,0,1) & 2 \\
1 (0,0,1) & 1 (0,0,1) & 0 (0,0,0) & 1 \\
1 (0,0,1) & 2 (0,1,0) & 3 (0,1,1) & 4 \\
1 (0,0,1) & 3 (0,1,1) & 2 (0,1,0) & 3 \\
1 (0,0,1) & 4 (1,0,0) & 5 (1,0,1) & 6 \\
1 (0,0,1) & 5 (1,0,1) & 4 (1,0,0) & 5 \\
1 (0,0,1) & 6 (1,1,0) & 7 (1,1,1) & 0 \\
1 (0,0,1) & 7 (1,1,1) & 6 (1,1,0) & 7 \\
\hline
2 (0,1,0) & 0 (0,0,0) & 2 (0,1,0) & 3 \\
2 (0,1,0) & 1 (0,0,1) & 3 (0,1,1) & 4 \\
2 (0,1,0) & 2 (0,1,0) & 0 (0,0,0) & 1 \\
2 (0,1,0) & 3 (0,1,1) & 1 (0,0,1) & 2 \\
2 (0,1,0) & 4 (1,0,0) & 6 (1,1,0) & 7 \\
2 (0,1,0) & 5 (1,0,1) & 7 (1,1,1) & 0 \\
2 (0,1,0) & 6 (1,1,0) & 4 (1,0,0) & 5 \\
2 (0,1,0) & 7 (1,1,1) & 5 (1,0,1) & 6 \\
\hline
3 (0,1,1) & 0 (0,0,0) & 3 (0,1,1) & 4 \\
3 (0,1,1) & 1 (0,0,1) & 2 (0,1,0) & 3 \\
3 (0,1,1) & 2 (0,1,0) & 1 (0,0,1) & 2 \\
3 (0,1,1) & 3 (0,1,1) & 0 (0,0,0) & 1 \\
3 (0,1,1) & 4 (1,0,0) & 7 (1,0,0) & 0 \\
3 (0,1,1) & 5 (1,0,1) & 6 (1,1,0) & 7 \\
3 (0,1,1) & 6 (1,1,0) & 5 (1,1,0) & 6 \\
3 (0,1,1) & 7 (1,1,1) & 4 (1,0,0) & 5 \\
\hline
4 (1,0,0) & 0 (0,0,0) & 4 (1,0,0) & 5 \\
4 (1,0,0) & 1 (0,0,1) & 5 (1,1,0) & 6 \\
4 (1,0,0) & 2 (0,1,0) & 6 (1,1,0) & 7 \\
4 (1,0,0) & 3 (0,1,1) & 7 (1,0,0) & 0 \\
4 (1,0,0) & 4 (1,0,0) & 0 (0,0,0) & 1 \\
4 (1,0,0) & 5 (1,0,1) & 1 (0,0,1) & 2 \\
4 (1,0,0) & 6 (1,1,0) & 2 (0,1,0) & 3 \\
4 (1,0,0) & 7 (1,1,1) & 3 (0,1,1) & 4 \\
\hline
5 (1,0,1) & 0 (0,0,0) & 5 (1,1,0) & 6 \\
5 (1,0,1) & 1 (0,0,1) & 4 (1,0,0) & 5 \\
5 (1,0,1) & 2 (0,1,0) & 7 (1,0,0) & 0 \\
5 (1,0,1) & 3 (0,1,1) & 6 (1,1,0) & 7 \\
5 (1,0,1) & 4 (1,0,0) & 1 (0,0,1) & 2 \\
5 (1,0,1) & 5 (1,0,1) & 0 (0,0,0) & 1 \\
5 (1,0,1) & 6 (1,1,0) & 3 (0,1,1) & 4 \\
5 (1,0,1) & 7 (1,1,1) & 2 (0,1,0) & 3 \\
\hline
6 (1,1,0) & 0 (0,0,0) & 6 (1,1,0) & 7 \\
6 (1,1,0) & 1 (0,0,1) & 7 (1,0,0) & 0 \\
6 (1,1,0) & 2 (0,1,0) & 4 (1,0,0) & 5 \\
6 (1,1,0) & 3 (0,1,1) & 5 (1,1,0) & 6 \\
6 (1,1,0) & 4 (1,0,0) & 2 (0,1,0) & 3 \\
6 (1,1,0) & 5 (1,0,1) & 3 (0,1,1) & 4 \\
6 (1,1,0) & 6 (1,1,0) & 0 (0,0,0) & 1 \\
6 (1,1,0) & 7 (1,1,1) & 1 (0,0,1) & 2 \\
\hline
7 (1,1,1) & 0 (0,0,0) & 7 (1,0,0) & 0 \\
7 (1,1,1) & 1 (0,0,1) & 6 (1,1,0) & 7 \\
7 (1,1,1) & 2 (0,1,0) & 5 (1,1,0) & 6 \\
7 (1,1,1) & 3 (0,1,1) & 4 (1,0,0) & 5 \\
7 (1,1,1) & 4 (1,0,0) & 3 (0,1,1) & 4 \\
7 (1,1,1) & 5 (1,0,1) & 2 (0,1,0) & 3 \\
7 (1,1,1) & 6 (1,1,0) & 1 (0,0,1) & 2 \\
7 (1,1,1) & 7 (1,1,1) & 0 (0,0,0) & 1 \\
 \end{tabular}

\\

\end{tabular}
\caption{$g(x,m)$ for $\mathsf{N}=3$, $k=1$ }
\label{tab:Caesar}
\end{table}
\begin{table}[]
\centering
\begin{tabular}{cc}
\begin{tabular}{c|c|c|c}
$x$ & $m$ & $F_{f_0}(x,m)$ & $g(m,x) = \mathcal{E}_k \circ F_{f_0}(x,m)$ \\
\hline
0 (0,0,0) & 0 (0,0,0) & 0 (0,0,0) & 2 \\
0 (0,0,0) & 1 (0,0,1) & 1 (0,0,1) & 3 \\
0 (0,0,0) & 2 (0,1,0) & 2 (0,1,0) & 4 \\
0 (0,0,0) & 3 (0,1,1) & 3 (0,1,1) & 5 \\
0 (0,0,0) & 4 (1,0,0) & 4 (1,0,0) & 6 \\
0 (0,0,0) & 5 (1,0,1) & 5 (1,0,1) & 7 \\
0 (0,0,0) & 6 (1,1,0) & 6 (1,1,0) & 0 \\
0 (0,0,0) & 7 (1,1,1) & 7 (1,1,1) & 1 \\
\hline
1 (0,0,1) & 0 (0,0,0) & 1 (0,0,1) & 3 \\
1 (0,0,1) & 1 (0,0,1) & 0 (0,0,0) & 2 \\
1 (0,0,1) & 2 (0,1,0) & 3 (0,1,1) & 5 \\
1 (0,0,1) & 3 (0,1,1) & 2 (0,1,0) & 4 \\
1 (0,0,1) & 4 (1,0,0) & 5 (1,0,1) & 7 \\
1 (0,0,1) & 5 (1,0,1) & 4 (1,0,0) & 6 \\
1 (0,0,1) & 6 (1,1,0) & 7 (1,1,1) & 1 \\
1 (0,0,1) & 7 (1,1,1) & 6 (1,1,0) & 0 \\
\hline
2 (0,1,0) & 0 (0,0,0) & 2 (0,1,0) & 4 \\
2 (0,1,0) & 1 (0,0,1) & 3 (0,1,1) & 5 \\
2 (0,1,0) & 2 (0,1,0) & 0 (0,0,0) & 2 \\
2 (0,1,0) & 3 (0,1,1) & 1 (0,0,1) & 3 \\
2 (0,1,0) & 4 (1,0,0) & 6 (1,1,0) & 0 \\
2 (0,1,0) & 5 (1,0,1) & 7 (1,1,1) & 1 \\
2 (0,1,0) & 6 (1,1,0) & 4 (1,0,0) & 6 \\
2 (0,1,0) & 7 (1,1,1) & 5 (1,0,1) & 7 \\
\hline
3 (0,1,1) & 0 (0,0,0) & 3 (0,1,1) & 5 \\
3 (0,1,1) & 1 (0,0,1) & 2 (0,1,0) & 4 \\
3 (0,1,1) & 2 (0,1,0) & 1 (0,0,1) & 3 \\
3 (0,1,1) & 3 (0,1,1) & 0 (0,0,0) & 2 \\
3 (0,1,1) & 4 (1,0,0) & 7 (1,0,0) & 1 \\
3 (0,1,1) & 5 (1,0,1) & 6 (1,1,0) & 0 \\
3 (0,1,1) & 6 (1,1,0) & 5 (1,1,0) & 7 \\
3 (0,1,1) & 7 (1,1,1) & 4 (1,0,0) & 6 \\
\hline
4 (1,0,0) & 0 (0,0,0) & 4 (1,0,0) & 6 \\
4 (1,0,0) & 1 (0,0,1) & 5 (1,1,0) & 7 \\
4 (1,0,0) & 2 (0,1,0) & 6 (1,1,0) & 0 \\
4 (1,0,0) & 3 (0,1,1) & 7 (1,0,0) & 1 \\
4 (1,0,0) & 4 (1,0,0) & 0 (0,0,0) & 2 \\
4 (1,0,0) & 5 (1,0,1) & 1 (0,0,1) & 3 \\
4 (1,0,0) & 6 (1,1,0) & 2 (0,1,0) & 4 \\
4 (1,0,0) & 7 (1,1,1) & 3 (0,1,1) & 5 \\
\hline
5 (1,0,1) & 0 (0,0,0) & 5 (1,1,0) & 7 \\
5 (1,0,1) & 1 (0,0,1) & 4 (1,0,0) & 6 \\
5 (1,0,1) & 2 (0,1,0) & 7 (1,0,0) & 1 \\
5 (1,0,1) & 3 (0,1,1) & 6 (1,1,0) & 0 \\
5 (1,0,1) & 4 (1,0,0) & 1 (0,0,1) & 3 \\
5 (1,0,1) & 5 (1,0,1) & 0 (0,0,0) & 2 \\
5 (1,0,1) & 6 (1,1,0) & 3 (0,1,1) & 5 \\
5 (1,0,1) & 7 (1,1,1) & 2 (0,1,0) & 4 \\
\hline
6 (1,1,0) & 0 (0,0,0) & 6 (1,1,0) & 0 \\
6 (1,1,0) & 1 (0,0,1) & 7 (1,0,0) & 1 \\
6 (1,1,0) & 2 (0,1,0) & 4 (1,0,0) & 6 \\
6 (1,1,0) & 3 (0,1,1) & 5 (1,1,0) & 7 \\
6 (1,1,0) & 4 (1,0,0) & 2 (0,1,0) & 4 \\
6 (1,1,0) & 5 (1,0,1) & 3 (0,1,1) & 5 \\
6 (1,1,0) & 6 (1,1,0) & 0 (0,0,0) & 2 \\
6 (1,1,0) & 7 (1,1,1) & 1 (0,0,1) & 3 \\
\hline
7 (1,1,1) & 0 (0,0,0) & 7 (1,0,0) & 1 \\
7 (1,1,1) & 1 (0,0,1) & 6 (1,1,0) & 0 \\
7 (1,1,1) & 2 (0,1,0) & 5 (1,1,0) & 7 \\
7 (1,1,1) & 3 (0,1,1) & 4 (1,0,0) & 6 \\
7 (1,1,1) & 4 (1,0,0) & 3 (0,1,1) & 5 \\
7 (1,1,1) & 5 (1,0,1) & 2 (0,1,0) & 4 \\
7 (1,1,1) & 6 (1,1,0) & 1 (0,0,1) & 3 \\
7 (1,1,1) & 7 (1,1,1) & 0 (0,0,0) & 2 \\
 \end{tabular}

\end{tabular}
\caption{$g(x,m)$ for $\mathsf{N}=3$, $k=2$ }
\label{tab:Caesar1}
\end{table}
\\

\begin{figure}[!ht]
    \centering
 \subfigure[$\mathsf{N}=3$, $k=1$]{\label{fig:CBCenc}
        \includegraphics[scale=0.2]{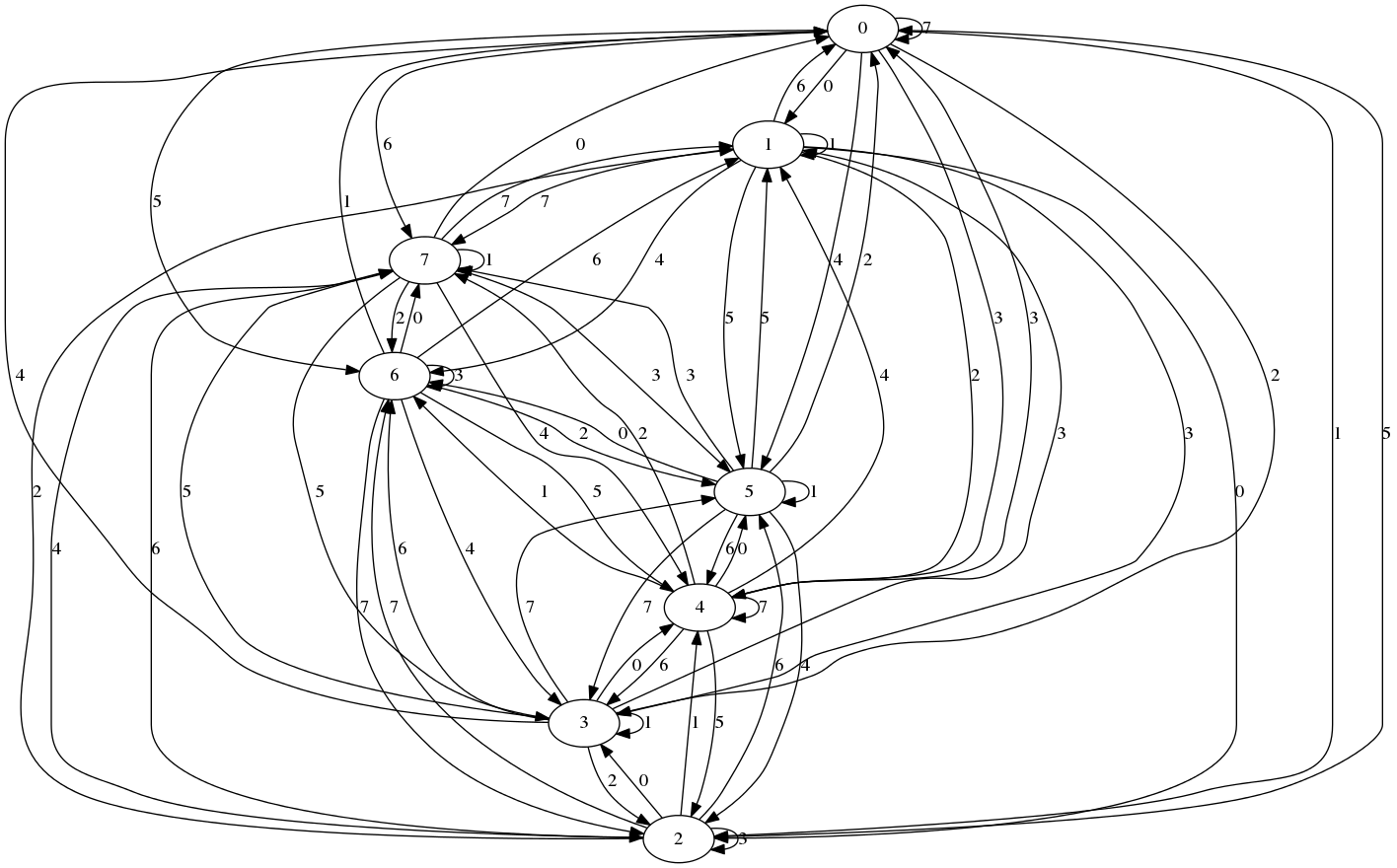}}     \subfigure[$\mathsf{N}=3$, $k=2$]{\includegraphics[scale=0.2]{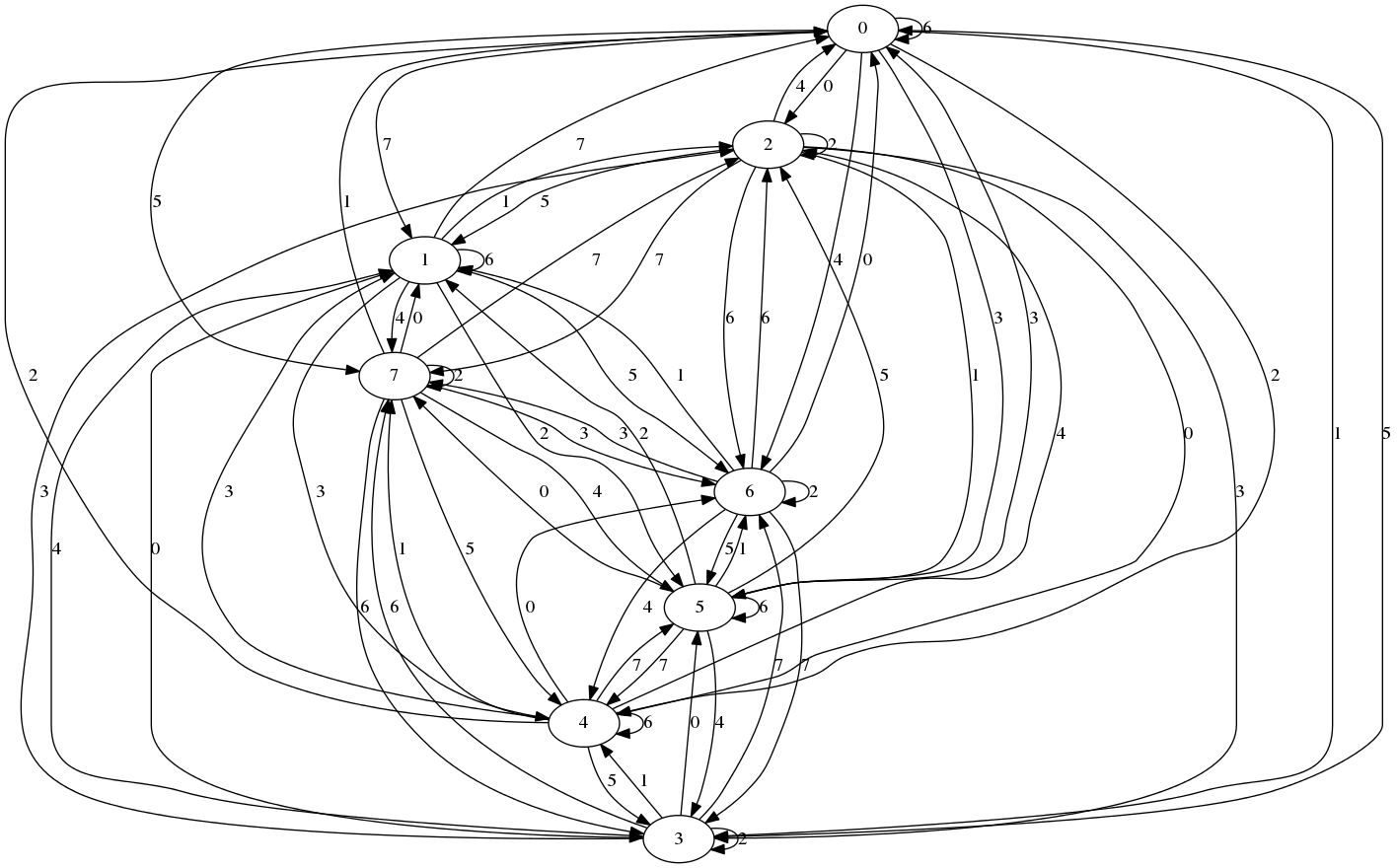}}
    \caption{$\mathcal{G}_g$ of some Caesar ciphers $\mathcal{E}_k(x):\llbracket 0, 2^\mathsf{N}-1 \rrbracket \longrightarrow \llbracket 0, 2^\mathsf{N}-1 \rrbracket , x \longmapsto x+k \mod 2^\mathsf{N}$}
     \label{fig:CBC2}
     \end{figure}

Figure~\ref{fig:CBC2}, for its part, presents the graph of iteration of the Caesar based CBC mode of operation, with the same kind of shifts, operating on blocks of size 3. 
We can verify that, at each time, the cipher block chaining behaves chaotically. In that situation, we can guarantee that any error on the IV (starting state) or on the message to encrypt (edges to browse) may potentially lead to a completely different list of visited states, that is, of a completely different ciphertext.

In this section,  it has been proven that some well chosen block ciphers can lead to a chaotical behavior for the CBC mode of operation. In the following section, we will recall some quantitative measures of chaos that have already been proven in one of our previous research work.
\section{Quantitatives measures}
\label{section:Quantitative measures}

In this section, we are interested to recall our previous results related to quantitative measures. They have been detailed in~\cite {abidi:hal-01264170}, in which both expansivity and sensibility of symmetric cyphers are regarded, in the case of CBC mode of operation. These quantitative topology metrics, taken from the mathematical theory of chaos, allow to measure in which extent a slight error on the initial condition is magnified during iterations.

In this research work we stated that, in addition to being chaotic as defined in the Devaney's formulation, the CBC mode of operation is indeed largely sensible to initial errors or modifications on either the IV or the message to encrypt. The second important tool that reinforces the chaotic behavior of the CBC mode of operation is the expansivity. This property has been evaluated too, but it is not satisfied, as it has been established thanks to a counter example. For more details and
to do this, we have began by announcing and then proving these two following propositions:
\begin{proposition}
 The CBC mode of operation is sensible to the initial condition, and its constant of sensibility is larger than the length $\mathsf{N}$ of the block size.
 \end{proposition}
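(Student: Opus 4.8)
The plan is to exhibit a single sensitivity constant $\delta$ with $\delta>\mathsf{N}$ and show it works for every point and every neighbourhood. Write a point of $\mathcal{X}$ as $X=(x,m)$, with internal state $x\in\mathds{B}^\mathsf{N}$ and message $m=(m^0,m^1,\dots)\in\mathcal{S}_\mathsf{N}$, and let $s_n$ denote the internal-state component of $G^{\circ n}(X)$, so that $s_0=x$ and $s_{j+1}=\mathcal{E}_k\!\left(F_{f_0}(s_j,m^j)\right)$. By construction of $d$, the integer part of $d\left(G^{\circ n}(X),G^{\circ n}(Y)\right)$ is the Hamming distance between the two internal states, hence at most $\mathsf{N}$, while the decimal part $d_m$ lies in $[0,1]$. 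Consequently the only way to push the distance above $\mathsf{N}$ is to force, at some iterate, the two internal states to differ in \emph{all} $\mathsf{N}$ bits (integer part exactly $\mathsf{N}$) together with a nonzero message contribution. Note that regularity and transitivity already yield sensitivity qualitatively through Banks \emph{et al.}~\cite{Banks92}, but they say nothing about the size of the constant, which is why a direct construction is needed here.

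First I would fix an arbitrary $X=(x,m)$ and an arbitrary neighbourhood $\mathcal{V}$ of $X$, say of radius $\alpha>0$, and choose $k_0\in\mathds{N}$ with $10^{-k_0}<\alpha$. Define $Y=(x,\check m)$ sharing the internal state of $X$ and whose message $\check m$ coincides with $m$ on the first $k_0$ blocks. Since the internal states are equal, $d_e=0$, and since $m$ and $\check m$ agree on the first $k_0$ blocks, the geometric decay built into $d_m$ gives $d(X,Y)\leq 10^{-k_0}<\alpha$, so $Y\in\mathcal{V}$. Because $X$ and $Y$ start from the same internal state and are fed the same first $k_0$ message blocks, a straightforward induction on the recurrence for $s_j$ shows that the two internal-state trajectories coincide up to step $k_0$; in particular $s_{k_0}=\check s_{k_0}$, and the divergence can only be seeded by the block consumed at the transition $k_0\to k_0+1$.

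The heart of the argument is the choice of that block $\check m^{k_0}$. Using that $\mathcal{E}_k$ is a bijection with inverse $\mathcal{D}_k$, I would set $\check m^{k_0}=s_{k_0}\oplus \mathcal{D}_k\!\left(\overline{s_{k_0+1}}\right)$, where $\overline{\,\cdot\,}$ is the bitwise negation. A direct computation then gives $\check s_{k_0+1}=\mathcal{E}_k\!\left(F_{f_0}(s_{k_0},\check m^{k_0})\right)=\overline{s_{k_0+1}}$, so the internal states at iterate $k_0+1$ are exact complements and their Hamming distance equals $\mathsf{N}$; injectivity of $\mathcal{E}_k$ also guarantees $\check m^{k_0}\neq m^{k_0}$. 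Choosing in addition one later block of $\check m$, say $\check m^{k_0+1}=\overline{m^{k_0+1}}$, keeps the message contribution at iterate $k_0+1$ bounded below by $\frac{9}{\mathsf{N}}\cdot\mathsf{N}\cdot\frac{1}{10}=0.9$. Hence $d\left(G^{\circ (k_0+1)}(X),G^{\circ (k_0+1)}(Y)\right)\geq \mathsf{N}+0.9$, and any $\delta$ with $\mathsf{N}<\delta\leq \mathsf{N}+0.9$ is a valid constant of sensitivity, which is indeed larger than $\mathsf{N}$.

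The main obstacle is precisely this forcing step: a transitivity-type argument only guarantees \emph{some} difference between the internal states, whereas the bound $\delta>\mathsf{N}$ demands a \emph{complete} $\mathsf{N}$-bit difference. What makes it go through is the invertibility of the block cipher $\mathcal{E}_k$, which lets me prescribe the post-encryption state to be the exact complement by solving for the appropriate message block through $\mathcal{D}_k$; remarkably, this uses nothing about the graph $\mathcal{G}_g$ being strongly connected. The remaining points — that $Y$ genuinely lies in $\mathcal{V}$, that the two trajectories agree up to step $k_0$, and that the tail of $\check m$ keeps $d_m$ away from $0$ — are routine once the distance $d$ and the recurrence for $s_n$ are unwound.
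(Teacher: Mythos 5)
Your proof is correct and follows essentially the same route as the paper's: pick $k_0$ from the neighbourhood radius, keep the internal state and the first $k_0$ message blocks unchanged, and exploit invertibility of $\mathcal{E}_k$ to solve for the forcing block $m'=y\oplus\mathcal{D}_k\left(\overline{z}\right)$ that makes the next internal state the exact bitwise complement (Hamming distance $\mathsf{N}$), then negate later blocks so that $d_m>0$. Your only deviation is cosmetic and in fact slightly sharper: by negating the block $\check m^{k_0+1}$ you obtain the uniform lower bound $\mathsf{N}+0.9$ independent of the neighbourhood, whereas the paper's choice (negating all blocks from index $k_0+2$ on) yields $\mathsf{N}+10^{-(k_0+1)}$, which tends to $\mathsf{N}$ as the neighbourhood shrinks.
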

 \begin{proof}
Let $X=(x;(m^0, m^1, ...)) \in \mathcal{X}$ and $\delta >0$. We are looking for $X'=(x';({m'}^0, {m'}^1, ...)$ and $n \in \mathds{N}$ such that $d(X,X') < \delta$ and $d(G_g^n(X),G_g^n(X'))>N$.

Let us define $k_0 = \lfloor -log_{10}(\delta) \rfloor +1$, in such a way that all $X'$ of the form: $$(X_1, (m^0, m^1, ..., m^{k_0}, m'^{k_0+1}, m'^{k_0+2}, ...) )$$
are such that $d(X, X')<\delta$. In other words, all messages $m'$ whose $k_0$ first blocks are equal to $(m^0, m^1, ..., m^{k_0})$ are $\delta$-close to $X$.

\begin{figure}[!h]
\centering
\includegraphics[scale=0.35]{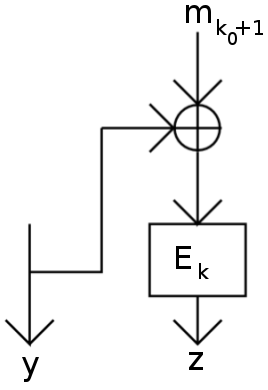}
\caption{$k_0+1$-th iterate of $G_g$}
\label{fig:iterate}
\end{figure}

Let $y=G_g^{k_0}(X)_1$ and $z=G_g^{k_0+1}(X)_1$ as defined in Figure~\ref{fig:iterate}. We consider the block message $m'$ defined by:$$m'=y\oplus \mathcal{D}_k(\overline{z})$$ 
where $\mathcal{D}_k$ is the keyed decryption function associated to $\mathcal{E}_k$, and $\overline{z}$ is the negation of $z$. We thus define $X'$ as follow:
\begin{itemize}
\item $X_1'=x$,
\item $\forall k \leqslant k_0, {m'}^k=m^k$,
\item ${m'}^{k_0+1}=m'$,
\item $\forall k \geqslant k_0+2$, ${m'}^k=\overline{m^k}$,
\end{itemize}
so $d(G_g^{k_0+1}(X), G_g^{k_0+1}(X'))$

$\begin{array}{l}
= d\left(G_g\left(y;(m_{k_0+1}, m_{k_0+2}, ...)\right),\right.\\        
~~~~~~ \left. G_g\left(y;(m', \overline{m_{k_0+1}}, \overline{m_{k_0+2}}, ...)\right) \right)\\
= d\left(\left(z;(m_{k_0+2}, m_{k_0+3}, ...)\right),\right.\\        
~~~~~~ \left. \left(E_k(y\oplus m');(\overline{m_{k_0+1}}, \overline{m_{k_0+2}}, ...)\right) \right)\\
= d\left(\left(z;(m_{k_0+2}, m_{k_0+3}, ...)\right),\right.\\        
~~~~~~ \left. \left(E_k(y\oplus (y \oplus D_k(\overline{z})));(\overline{m_{k_0+1}}, \overline{m_{k_0+2}}, ...)\right) \right)\\
= d\left(\left(z;(m_{k_0+2}, m_{k_0+3}, ...)\right),\right.\\        
~~~~~~ \left. \left(E_k((y\oplus y) \oplus D_k(\overline{z}));(\overline{m_{k_0+1}}, \overline{m_{k_0+2}}, ...)\right) \right)\\
= d\left(\left(z;(m_{k_0+2}, m_{k_0+3}, ...)\right),\right.\\        
~~~~~~ \left. \left(E_k( 0 \oplus D_k(\overline{z}));(\overline{m_{k_0+1}}, \overline{m_{k_0+2}}, ...)\right) \right)\\
= d\left(\left(z;(m_{k_0+2}, m_{k_0+3}, ...)\right),\right.\\        
~~~~~~ \left. \left(E_k(D_k(\overline{z}));(\overline{m_{k_0+1}}, \overline{m_{k_0+2}}, ...)\right) \right)\\
= d\left(\left(z;(m_{k_0+2}, m_{k_0+3}, ...)\right), \left(\overline{z};(\overline{m_{k_0+1}}, \overline{m_{k_0+2}}, ...)\right) \right)\\
= d_e(z, \overline{z})\\
~~~~~~ + d_m((m_{k_0+2}, m_{k_0+3}, ...), (\overline{m_{k_0+1}}, \overline{m_{k_0+2}}, ...) )\\
= \mathsf{N} + \dfrac{9}{\mathsf{N}} \sum_{k=k_0+2}^\infty \dfrac{m_k-\overline{m_k}}{10^k}\\
= \mathsf{N} + \dfrac{9}{\mathsf{N}} \sum_{k=k_0+2}^\infty \dfrac{\mathsf{N}}{10^k}\\
= \mathsf{N} + 9 \sum_{k=k_0+2}^\infty \left(\dfrac{1}{10^k}\right) = \mathsf{N} + \dfrac{1}{10^{k_0+1}} > \mathsf{N},
\end{array} $

\noindent which concludes the proof of the sensibility of $G_g$.
\end{proof}
\begin{proposition}
The CBC mode of operation is not expansive.
\end{proposition}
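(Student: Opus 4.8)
The plan is to refute expansivity by directly contradicting Definition~\ref{def:expan}. Negating that statement, the CBC mode is \emph{not} expansive as soon as, for \emph{every} prospective constant $\varepsilon > 0$, one can produce two distinct points $X \neq X'$ of $\mathcal{X}$ whose orbits stay $\varepsilon$-close forever, namely $d(G_g^{n}(X), G_g^{n}(X')) < \varepsilon$ for all $n \in \mathds{N}$. Thus the whole task reduces to exhibiting, for an arbitrary $\varepsilon$, one such pair; since $\varepsilon$ is free, this rules out the existence of any constant of expansivity.

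First I would reuse the scale $k_0 = \lfloor -\log_{10}(\varepsilon)\rfloor + 1$ already introduced in the sensitivity proof, which guarantees that two messages sharing their first $k_0$ blocks are $\varepsilon$-close for $d_m$. I would then pick $X = (x, m)$ and $X' = (x, m')$ with the \emph{same} initial internal state $x$ and with $m, m'$ agreeing on their first $k_0$ blocks, so that at the outset $d_e = 0$ and $d_m(m,m') < \varepsilon$, and so that the two internal states remain synchronized during the first $k_0$ iterates of $G_g$.

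The heart of the matter, and the step I expect to be the main obstacle, is to keep the distance below $\varepsilon$ once the shift $\sigma$ has consumed the common prefix and starts feeding the differing blocks into $F_{f_0}$. Because $\mathcal{E}_k$ is a bijection and $F_{f_0}$ acts injectively on the state, the instant the two messages present different blocks the internal states must diverge; and since $d_e$ is integer-valued, a single differing bit already forces $d_e \geqslant 1$, that is, a jump of the total distance to at least $1$. The crux of the argument is therefore to tame this integer part: I would try to design the tail of $m'$ block by block, using the invertibility of $\mathcal{E}_k$ (through $\mathcal{D}_k$) to cancel at step $n+1$ the state discrepancy created at step $n$, by solving $c_n \oplus m^n = c'_n \oplus m'^n$ for the perturbed block $m'^n$, so that the orbits re-synchronize and $d_e$ falls back to $0$.

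The remaining and decisive verification is that such a re-synchronization can be sustained for \emph{all} $n$ while holding $\sup_{n} d(G_g^{n}(X), G_g^{n}(X'))$ strictly below $\varepsilon$; making this explicit on a concrete cipher, such as the Caesar shift of Section~\ref{sec:proof}, is what should turn the heuristic into a genuine counterexample. If it succeeds for arbitrarily small $\varepsilon$, then no $\varepsilon$ can serve as a constant of expansivity, and the CBC mode of operation is not expansive.
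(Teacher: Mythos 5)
Your plan fails at exactly the step you yourself flag as ``the main obstacle,'' and the failure is structural, not a matter of more careful bookkeeping. With the same initial internal state $x$ for both points, let $j$ be the first index at which $m$ and ${m'}$ differ. The two internal states coincide while the common prefix is consumed, and at the next iterate they become $\mathcal{E}_k(c \oplus m^j)$ and $\mathcal{E}_k(c \oplus {m'}^j)$ for the same state $c$; since $\mathcal{E}_k$ is a bijection, these next states differ, so $d_e \geqslant 1$ at that iterate and hence $d\left(G_g^{j+1}(X), G_g^{j+1}(X')\right) \geqslant 1$. Your re-synchronization trick (choosing ${m'}^n = m^n \oplus c_n \oplus c'_n$ via $\mathcal{D}_k$) only re-equalizes the states one step \emph{after} the divergence; but the negated definition you correctly wrote down requires $d < \varepsilon$ at \emph{every} $n$, and the one-step transient already violates it for any $\varepsilon \leqslant 1$, because $d_e$ is integer-valued. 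So no choice of tail, and no concrete cipher such as the Caesar shift, can make your ``decisive verification'' succeed for small $\varepsilon$: with equal initial states and unequal messages, an orbit separation of size at least $1$ at some positive time is forced. Your heuristic cannot be turned into a proof.

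The paper's proof does the opposite of your setup: it takes \emph{different} initial states $x=(1,0,\hdots,0)$ and $x'=(0,1,0,\hdots,0)$ and compensates with the \emph{first} message blocks, arranging $x \oplus m^0 = x' \oplus {m'}^0$ (with all later blocks null and equal). The two orbits then literally merge after one iterate --- this exploits the non-injectivity of $G_g$ --- so that $d\left(G_g^n(X), G_g^n(X')\right) = 0$ for all $n>0$, while $d(X,X')>0$. Two ideas are missing from your attempt. First, a single such pair defeats every candidate constant $\varepsilon$ simultaneously, since the orbit distance is exactly $0$ rather than merely $< \varepsilon$; your per-$\varepsilon$ construction and the $k_0$ scale borrowed from the sensitivity proof are unnecessary. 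Second, the unavoidable unit-size separation (forced by the integer-valued $d_e$ and the injectivity of $\mathcal{E}_k$) must be pushed to time $n=0$, where it is harmless because the pair is in any case required to be distinct; the paper concludes from $d=0$ for all $n>0$, i.e., it reads the existential index in Definition~\ref{def:expan} over positive times. Your construction instead places the forced separation at a positive time, where it is fatal.
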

\begin{proof} 
Consider for instance two initial vectors $x=(1,0,\hdots, 0)$ and $x'=(0,1,0,\hdots, 0)$, associated to the messages $m=((0,1,0, \hdots, 0), (0, \hdots, 0), (0, \hdots, 0), \hdots )$ and $m'=((1,0, \hdots, 0), (0, \hdots, 0), (0, \hdots, 0), \hdots )$: all blocks of messages are null in both $m$ and $m'$, except the first block. Let $X=(x,m)$ and $X'=(x',m')$.

Obviously, $x \neq x'$, while $x \oplus m_0 = x' \oplus m_0'$. This latter implies that $X_1^0 = {X'}_1^0$, and by a recursive process, we can conclude that $\forall i \in \mathds{N}, X_1^i = {X'}_1^i$. So the distance between points $X=(x,m)$ and $X'=(x',m')$ is strictly positive, while for all  $n>0$, $d\left(G_g^n(X), G_g^n(X')\right)=0$, which concludes the proof of the non expansive character of the CBC mode of operation by the mean of the exhibition of a counter example.
\end{proof}
To sum up, proving these two propositions claimed previously allowed us to conclude that CBC mode of operation is sensible to the initial conditions. But, on the other side, it is not expansive. 

Let us now investigate new original aspects of chaos of the CBC mode of operation.

\section{Topological mixing and Topological entropy}
\label{top mixing}
\subsection{Topological mixing}
\label{topo}
As mentioned in Definition~\ref{def:topMixing}, a discrete dynamical system is said \emph{topologically mixing} if and only if, for any couple of disjoint open set $\mathcal{U},\mathcal{V} \neq \varnothing$, there exists an integer $n_0\in \mathds{N}$ such that, for all $n > n_0$, $f^{\circ n}(\mathcal{U}) \cap \mathcal{V} \neq
\varnothing$.
In ~\cite{abidi:hal-01312476}, we have deepened the topological study for the CBC mode of operation. Indeed, we have regarded if this tool possesses the property of topological mixing. In order to proof this property, we have began by stating the following proposition: 
\begin{proposition}
\label{prop:topological mixing}
$(\mathcal{X}, G_g)$ is topologically mixing.
\end{proposition}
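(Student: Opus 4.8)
The plan is to exploit the product structure of $(\mathcal{X},G_g)$, in which the internal-state coordinate performs a walk on the graph $\mathcal{G}_g$ while the message coordinate is a one-sided shift; the argument then follows the classical route for proving mixing of subshifts of finite type, the relevant case being the one where $\mathcal{G}_g$ is strongly connected (as already assumed in Proposition~\ref{prop:transitivity}). I would first translate openness into a prefix condition: since $d=d_e+d_m$ and $d_e$ is the integer-valued Hamming distance on $\mathds{B}^\mathsf{N}$, any ball of radius smaller than $1$ about a point $(x,m)$ forces the state to equal $x$ exactly and pins an initial segment of the message blocks of $m$, the number of pinned blocks growing as the radius shrinks (the factor $9/\mathsf{N}$ in $d_m$ merely rescales these thresholds). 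Conversely every nonempty open set contains such a basic cylinder, so it suffices to handle two cylinders: $\mathcal{U}$, all $(x,m')$ with state $x$ and first $p$ blocks $m^0,\dots,m^{p-1}$, and $\mathcal{V}$, all $(y,s')$ with state $y$ and first $q$ blocks $s^0,\dots,s^{q-1}$, the later blocks being free in both.

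Next I would unwind one orbit. For $(x,m')\in\mathcal{U}$ one has $G_g^{n}(x,m')=\bigl(x^{(n)},({m'}^{n},{m'}^{n+1},\dots)\bigr)$ with $x^{(0)}=x$ and $x^{(j+1)}=g({m'}^{j},x^{(j)})$, so the state after $n$ steps depends only on $x$ and on ${m'}^{0},\dots,{m'}^{n-1}$, whereas the tail is the $n$-shift of $m'$. Hence $G_g^n(x,m')\in\mathcal{V}$ splits into two independent demands: $x^{(n)}=y$, and ${m'}^{n+j}=s^{j}$ for $0\leqslant j\leqslant q-1$. For $n>p$ these do not interfere, because the index ranges involved are pairwise disjoint: blocks $0,\dots,p-1$ are pinned by $\mathcal{U}$, blocks $p,\dots,n-1$ are free and will steer the state, and blocks $n,\dots,n+q-1$ are free and are simply set equal to $s^0,\dots,s^{q-1}$.

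It remains to realise the state demand inside $\mathcal{G}_g$. After the pinned prefix the state sits at a determined vertex $x^{(p)}$, and choosing ${m'}^{p},\dots,{m'}^{n-1}$ is exactly choosing a walk of length $n-p$ from $x^{(p)}$ to $y$. Since $g(m,x)=\mathcal{E}_k(x\oplus m)$ with $F_{f_0}(x,m)=x\oplus m$ and $\mathcal{E}_k$ bijective, for each ordered pair of vertices there is exactly one block $m$ realising the corresponding edge; thus $\mathcal{G}_g$ is the complete directed graph on the $2^\mathsf{N}$ words and, in particular, carries a loop at every vertex. Consequently a walk of \emph{every} length $r\geqslant 1$ joins any vertex to any other (loop $r-1$ times, then take the final edge), so the state demand is solvable for all $n>p$. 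Taking $n_0=p$, every $n>n_0$ then yields a point of $\mathcal{U}$ whose $n$-th image lies in $\mathcal{V}$, which is exactly topological mixing.

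The step I expect to be the main obstacle is precisely this last one: mixing requires a connecting walk of length \emph{exactly} $n$ for \emph{all} large $n$, not merely for some $n$ as strong transitivity (Proposition~\ref{prop:transitivity}) already provides. This is an aperiodicity requirement on $\mathcal{G}_g$, which the self-loops settle here for free; for a general strongly connected $\mathcal{G}_g$ one would instead have to assume that its cycle lengths have greatest common divisor $1$ and invoke primitivity to obtain walks of all sufficiently large lengths between any two vertices. Everything else is bookkeeping on the cylinder radii and on the disjointness of the three index ranges.
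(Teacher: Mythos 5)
Your proof is correct, and it rests on the same two structural facts that drive the paper's argument: sufficiently small open sets of $(\mathcal{X},d)$ are cylinders (the state pinned exactly, a finite prefix of message blocks pinned, the tail free), and for $g=\mathcal{E}_k\circ F_{f_0}$ with $f_0$ the vectorial negation one has $g(m,x)=\mathcal{E}_k(x\oplus m)$, so that $m\mapsto g(m,x)$ is a bijection and $\mathcal{G}_g$ is the complete directed graph on the $2^{\mathsf{N}}$ words. The packaging, however, is genuinely different. The paper does not argue cylinder-to-cylinder: its key lemma is the stronger covering statement that for every open ball $\mathcal{B}$ of $\mathcal{X}$ there exists an $n$ with $G_g^{\circ n}(\mathcal{B})=\mathcal{X}$ --- a form of topological exactness, obtained because after consuming the pinned prefix a single free block already reaches every state, with an arbitrary tail --- and mixing is then declared an immediate consequence. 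That deduction tacitly uses the surjectivity of $G_g$ (equivalently, one more appeal to the completeness of $\mathcal{G}_g$) to pass from the single $n$ of the lemma to all $n>n_0$. Your direct construction of a point of $\mathcal{U}$ for each individual $n>p$, via a connecting walk of exact length $n-p$, makes that hidden step explicit, and your closing remark isolates precisely why it matters: strong transitivity (Proposition~\ref{prop:transitivity}) only supplies walks of \emph{some} length, so mixing needs an aperiodicity ingredient, which the loops of the complete graph settle here for free but which would require primitivity (cycle lengths of gcd $1$) for a general strongly connected $\mathcal{G}_g$. What each approach buys: the paper's lemma yields the stronger per-ball exactness property together with a one-line deduction of mixing; your route is self-contained at the level of mixing itself and generalizes cleanly beyond the negation-based $g$ to any iteration graph that is primitive rather than complete.
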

which is an immediate consequence of the lemma below.
\begin{lemma}
For any open ball  $\mathcal{B}=\mathcal{B}((x,m),\varepsilon)$ of $\mathcal{X}$, an index $n$ can be found such that $G_{g}^{\circ n}(\mathcal{B}) = \mathcal{X}$.
\end{lemma}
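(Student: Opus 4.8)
The plan is to establish the stronger Lemma directly, since topological mixing (Proposition~\ref{prop:topological mixing}) then follows immediately: because $G_g$ is onto (given a target $(x',m'')$, take $m=(b,m''^0,m''^1,\dots)$ with $b=x\oplus\mathcal{D}_k(x')$ for any $x$, so that $g(b,x)=\mathcal{E}_k(x\oplus b)=x'$ and $\sigma(m)=m''$), any equality $G_g^{\circ n}(\mathcal{B})=\mathcal{X}$ propagates to all larger indices via $G_g^{\circ(n+1)}(\mathcal{B})=G_g(\mathcal{X})=\mathcal{X}$, which is exactly the ``for all $n>n_0$'' required by Definition~\ref{def:topMixing}.

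First I would reduce the open ball to a cylinder set. Writing a generic point as $(x,m)$ with internal state $x\in\mathds{B}^\mathsf{N}$ and message $m=(m^0,m^1,\dots)\in\mathcal{S}_\mathsf{N}$, I use that $d=d_e+d_m$ where $d_e$ is integer-valued while $0\leqslant d_m\leqslant 10^{-(k_0+1)}$ as soon as two messages share their first $k_0+1$ blocks. Fixing an integer $k_0$ with $10^{-(k_0+1)}<\varepsilon$, the cylinder $C=\{(x,m') : m'^j=m^j \text{ for } 0\leqslant j\leqslant k_0\}$ of all points keeping the internal state $x$ and the first $k_0+1$ blocks of $m$ is then contained in $\mathcal{B}$ (the integer part $d_e$ forces the shared internal state when $\varepsilon\leqslant 1$; for $\varepsilon>1$ the ball is only larger). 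Hence it suffices to exhibit one index $n$ with $G_g^{\circ n}(C)=\mathcal{X}$.

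Next I would track the iteration and isolate the available freedom. For $(x,m')\in C$ one has $G_g^{\circ n}(x,m')=\bigl(x^{(n)},\sigma^n(m')\bigr)$ with $x^{(0)}=x$, $x^{(j+1)}=g(m'^j,x^{(j)})=\mathcal{E}_k\bigl(x^{(j)}\oplus m'^j\bigr)$, and $\sigma^n(m')=(m'^n,m'^{n+1},\dots)$. Since the blocks $m'^0,\dots,m'^{k_0}$ are frozen on $C$, the state $y_0:=x^{(k_0+1)}$ is the \emph{same} element of $\mathds{B}^\mathsf{N}$ for every point of $C$, whereas the blocks $m'^{k_0+1},m'^{k_0+2},\dots$ remain completely free. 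I then take $n=k_0+2$ and, for an arbitrary target $(x^*,m^*)\in\mathcal{X}$, construct a preimage in $C$: keep the frozen prefix, choose the single steering block $m'^{k_0+1}:=y_0\oplus\mathcal{D}_k(x^*)$ so that $x^{(k_0+2)}=\mathcal{E}_k\bigl(y_0\oplus m'^{k_0+1}\bigr)=\mathcal{E}_k(\mathcal{D}_k(x^*))=x^*$, and load the remaining tail by $m'^{k_0+2+j}:=m^{*j}$ so that $\sigma^{\,n}(m')=m^*$. This gives $G_g^{\circ n}(C)=\mathcal{X}$, hence $G_g^{\circ n}(\mathcal{B})=\mathcal{X}$.

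The hard part, and the reason a \emph{single} uniform $n$ can be used for every target, is the one-step controllability of the internal state: for the Boolean negation $f_0$ one has $g(\cdot,y_0)=\mathcal{E}_k(y_0\oplus\cdot)$, a bijection of $\mathds{B}^\mathsf{N}$, so a lone free block drives $y_0$ to any prescribed $x^*$; equivalently $\mathcal{G}_g$ is the complete digraph with loops. Had $\mathcal{G}_g$ been only strongly connected without this exact-one-step reachability, different targets would sit at different graph-distances from $y_0$ and reaching them all in a common number of iterations would demand an extra argument (padding walks to a common length, i.e. an aperiodicity/loop argument), so this structural fact is where I would concentrate the verification. I would also double-check the two quantitative points feeding the reduction, namely the bound $d_m\leqslant 10^{-(k_0+1)}$ on frozen prefixes and the role of the integer part $d_e$ in pinning the internal state inside small balls.
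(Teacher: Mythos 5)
Your proof is correct and takes essentially the same route as the authors' own argument: although this summary paper only states the lemma (deferring its proof to~\cite{abidi:hal-01312476}), your reduction of the ball to a cylinder via $k_0$ with $10^{-(k_0+1)}<\varepsilon$, followed by the steering block $m'^{k_0+1}=y_0\oplus\mathcal{D}_k(x^*)$ that exploits $\mathcal{E}_k\circ\mathcal{D}_k=\mathrm{id}$, is precisely the machinery the paper deploys in its sensitivity proof, here redirected to show surjectivity onto $\mathcal{X}$. Your closing observation that a single uniform $n$ works because $g(\cdot,y_0)=\mathcal{E}_k(y_0\oplus\cdot)$ is a bijection of $\mathds{B}^\mathsf{N}$ (so $\mathcal{G}_g$ is the complete digraph, making every target reachable in exactly one step) correctly identifies the structural fact, special to the vectorial negation $f_0$, on which the whole construction rests.
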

Proving this lemma led us to conclude the proposition claimed previously. Hence, this dynamical system owns the property of topological mixing.

In addition to this property, other quantitative evaluations have been performed, and the level of topological entropy has been evaluated too.

\subsection{Topological entropy}
 \label{sec:entro}
Another important tool to measure the chaotic behavior of a dynamical system is the topological entropy, which is defined only for compact topological spaces. Therefore, before studying the entropy of CBC mode of operation, we must then check that $(\mathcal{X},\ d)$ is compact.
\subsubsection{Compactness study}

In this section, we will prove that $(\mathcal{X}, d)$ is a compact topological space, in order to study its topological entropy later. Firstly, as $(\mathcal{X}, d)$ is a metric space, it is separated.
 It is however possible to give a direct proof of this result:

\begin{proposition}
$(\mathcal{X}, d)$ is a separated space.
\end{proposition}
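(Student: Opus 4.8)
The plan is to establish that $(\mathcal{X},d)$ is Hausdorff directly, by producing for any two distinct points a pair of disjoint open balls that separate them. The only genuine content lies in verifying that $d$ truly separates points, i.e. that $d(X,Y)>0$ whenever $X\neq Y$; once this is in hand, the classical triangle-inequality argument finishes everything.

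First I would confirm that $d=d_e+d_m$ is a bona fide metric. Both $d_e$ and $d_m$ are nonnegative and symmetric, so $d$ inherits these properties. For the triangle inequality, $d_e$ is the Hamming distance (a classical metric), while for $d_m$ the outer coefficients $\frac{9}{\mathsf{N}10^k}$ are nonnegative and the inner quantity $\sum_{i=1}^{\mathsf{N}}|m_i-\check m_i|$ obeys the triangle inequality termwise; summing preserves it, so $d$ satisfies the triangle inequality as well.

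Next I would prove the separation of points. Let $X=(x,m)\neq Y=(\check x,\check m)$. Two cases arise. If the internal states differ, $x\neq\check x$, then at least one Hamming coordinate is distinct, so $d_e(x,\check x)\geq 1$ and hence $d(X,Y)\geq 1$. Otherwise $x=\check x$ but the message sequences differ; letting $k_0$ be the least index at which the blocks of $m$ and $\check m$ disagree, the corresponding decimal place of $d_m$ is nonzero, giving $d_m(m,\check m)\geq \frac{9}{\mathsf{N}}\cdot\frac{1}{10^{k_0}}>0$. In either case $r:=d(X,Y)>0$. Setting $\varepsilon=r/2$, I would then consider the open balls $\mathcal{B}(X,\varepsilon)$ and $\mathcal{B}(Y,\varepsilon)$ and show they are disjoint: any common point $Z$ would yield $r=d(X,Y)\leq d(X,Z)+d(Z,Y)<\varepsilon+\varepsilon=r$, a contradiction. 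Thus the two points are separated, and $(\mathcal{X},d)$ is a separated space.

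The main obstacle is the point-separating step, and specifically justifying that $d_m(m,\check m)=0$ forces $m=\check m$: one must observe that the weighted per-block count $\frac{9}{\mathsf{N}}\sum_{i=1}^{\mathsf{N}}|m_i-\check m_i|$ never reaches $10$ (it is at most $9$), so no carrying occurs between decimal places and each digit isolates exactly one block — this is precisely what the normalization factor $\frac{9}{\mathsf{N}}$ is designed to guarantee. Everything else is routine.
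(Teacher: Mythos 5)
Your proof is correct, and its core case split is exactly the paper's: internal states differing gives $d_e \geq 1$, while a first disagreeing message block at index $k_0$ gives a nonzero $k_0$-th decimal digit of $d_m$. The difference is one of packaging. The paper's displayed proof skips the positivity computation's abstraction and directly exhibits explicit disjoint balls of radius $\frac{1}{2}$ in the first case and $10^{-(k+1)}$ in the second; you instead establish $r := d(X,Y) > 0$ and invoke the generic metric-space separation via $\mathcal{B}(X, r/2)$ and $\mathcal{B}(Y, r/2)$. Notably, the paper itself acknowledges your route in one line just before its proposition (``as $(\mathcal{X}, d)$ is a metric space, it is separated''), and your proposal supplies precisely what that remark leaves implicit: verification of the metric axioms for $d = d_e + d_m$, and the point-separating property of $d_m$, including the correct observation that the normalization $\frac{9}{\mathsf{N}}$ keeps each term below $10$ so that no decimal carrying can cancel a disagreement between blocks. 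So your version is slightly longer but more self-contained; the paper's is terser because it hands you the radii concretely and never needs the triangle inequality in its halved form. Both are sound.
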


\begin{proof}
Let $(x,w) \neq (\textrm{\^{x}},\textrm{\^{w}})$ two points of $\mathcal{X}$.

\begin{enumerate}
\item If $x \neq \textrm{\^{x}}$, then the intersection between the two balls  $\mathcal{B}\left((x,w),\frac{1}{2}\right)$ and $\mathcal{B}\left((\textrm{\^{x}},\textrm{\^{w}}), \frac{1}{2}\right)$  is empty.
\item Else, it exists $k\in\mathds{N}$ such that $w_k \neq \textrm{\^{w}}_k$, then the balls $\mathcal{B}\left((x,w),10^{-(k+1)}\right)$ and $\mathcal{B}\left((\textrm{\^{x}},\textrm{\^{w}}), 10^{-(k+1)}\right)$ can be chosen.
\end{enumerate}
\end{proof}
Let us now prove the compactness of the metric space $(\mathcal{X}, d)$ by using the sequential characterization of compactness.

\begin{proposition}
$(\mathcal{X}, d)$ is a compact space.
\end{proposition}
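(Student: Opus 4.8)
The plan is to establish compactness of $(\mathcal{X},d)$ through the sequential characterization of compactness: in a metric space, compactness is equivalent to sequential compactness, so it suffices to prove that every sequence of points of $\mathcal{X}$ admits a subsequence that converges in $\mathcal{X}$. The essential structural observation is that $\mathcal{X}=\mathds{B}^\mathsf{N}\times\mathcal{S}_\mathsf{N}$ is built from the \emph{finite} alphabet $\mathds{B}^\mathsf{N}$ (of cardinality $2^\mathsf{N}$) together with a countable product of copies of the \emph{finite} block set $\llbracket 0,2^\mathsf{N}-1\rrbracket$. Both factors are therefore compact, and the argument reduces to a Cantor diagonal extraction across these finitely-valued coordinates.

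Concretely, I would start from an arbitrary sequence $\left(X_n\right)_{n\in\mathds{N}}$ of $\mathcal{X}$, written $X_n=(x^{(n)},w^{(n)})$ with $x^{(n)}\in\mathds{B}^\mathsf{N}$ and $w^{(n)}=(w^{(n)}_k)_{k\geqslant 1}\in\mathcal{S}_\mathsf{N}$. First, since $\mathds{B}^\mathsf{N}$ is finite, the pigeonhole principle yields a value $x^\star\in\mathds{B}^\mathsf{N}$ attained for infinitely many indices, and I extract a first subsequence along which the internal state is constant, equal to $x^\star$. Next I treat the message component by a diagonal process: each block $w^{(n)}_k$ ranges over the finite set $\llbracket 0,2^\mathsf{N}-1\rrbracket$, so from the current subsequence I can extract a further subsequence on which the first block is constant, equal to some $w^\star_1$; from it, a subsequence on which the second block is constant equal to $w^\star_2$; and so on inductively. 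Passing to the diagonal subsequence $\left(X_{\varphi(n)}\right)$ then guarantees that for every fixed $k$ the $k$-th block of $w^{(\varphi(n))}$ is eventually equal to $w^\star_k$, while the internal state remains $x^\star$.

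It then remains to verify that this diagonal subsequence converges to the candidate limit $X^\star=(x^\star,w^\star)$ with $w^\star=(w^\star_k)_{k\geqslant 1}$, which belongs to $\mathcal{X}$ by construction. For the internal-state part this is immediate, since $d_e\left(x^{(\varphi(n))},x^\star\right)=0$ eventually. For the message part, given $\varepsilon>0$ I would pick an index $K$ so large that the tail of the series defining $d_m$ is below $\varepsilon$: using that each term $\sum_{i=1}^\mathsf{N}|w_i-\check w_i|$ is bounded by $\mathsf{N}$, the tail $\frac{9}{\mathsf{N}}\sum_{k>K}\frac{\mathsf{N}}{10^k}=9\sum_{k>K}10^{-k}=10^{-K}$ is controlled geometrically, exactly as in the sensibility computation. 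The diagonal extraction makes the first $K$ blocks coincide with $w^\star_1,\dots,w^\star_K$ for $n$ large, so those terms vanish and $d\left(X_{\varphi(n)},X^\star\right)\to 0$. The only delicate point — and the one I would write most carefully — is the bookkeeping of the nested extractions, ensuring that the final subsequence is simultaneously eventually constant on every coordinate; the analytic estimate on $d_m$ is then a routine geometric-tail bound of the same form already used earlier in the paper.
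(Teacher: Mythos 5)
Your proof is correct and follows exactly the route the paper announces: it establishes compactness via the sequential characterization, extracting a convergent subsequence by pigeonhole on the finite state space $\mathds{B}^\mathsf{N}$ and a Cantor diagonal argument over the finitely-valued blocks of $\mathcal{S}_\mathsf{N}$, then verifying convergence with the geometric tail bound $9\sum_{k>K}10^{-k}=10^{-K}$ on $d_m$. Both the extraction scheme and the analytic estimate are sound, so nothing is missing.
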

\subsubsection{Topological entropy}
Let $(X, d)$ be a compact metric space and $f: X \rightarrow X$ be a continuous map. For each natural number $n$, a new metric $d_n$ is defined on $X$ by

$$d_n(x,y)=\max\{d(f^{\circ i}(x),f^{\circ i}(y)): 0\leq i<n\}.$$

Given any $\varepsilon > 0$ and $n \geqslant 1$, two points of $X$ are $\varepsilon$-close with respect to this new metric if their first $n$ iterates are $\varepsilon$-close (according to $d$).

This metric allows one to distinguish in a neighborhood of an orbit the points that move away from each other during the iteration from the points that travel together. A subset $E$ of $X$ is said to be $(n, \varepsilon)$-separated if each pair of distinct points of $E$ is at least $\varepsilon$ apart in the metric $d_n$.

\begin{definition}
Let $H(n, \varepsilon)$ be the maximum cardinality of a $(n, \varepsilon)$-separated set, the \emph{topological entropy} of the map $f$ is defined by (see \textit{e.g.},~\cite{Adler65} or~\cite{Bowen})
$$h(f)=\lim_{\epsilon\to 0} \left(\limsup_{n\to \infty} \frac{1}{n}\log H(n,\varepsilon)\right). $$
\end{definition}

We have the result,

\begin{theorem}
Entropy of $(\mathcal{X},G_g)$ is infinite.
\end{theorem}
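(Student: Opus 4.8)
The plan is to work directly from the $(n,\varepsilon)$-separated set formulation of the entropy and to exhibit, for resolutions $\varepsilon$ taken smaller and smaller, families of points whose orbits are pairwise separated by at least $\varepsilon$ up to time $n$, with a cardinality that grows fast enough in $n$ that the exponential rate $\frac{1}{n}\log H(n,\varepsilon)$ can be forced above any prescribed bound. Since $h(G_g)=\lim_{\varepsilon\to 0}\limsup_{n\to\infty}\frac{1}{n}\log H(n,\varepsilon)$ and the inner quantity is non-decreasing as $\varepsilon\downarrow 0$, it suffices to produce, for every $M>0$, a scale $\varepsilon_M$ at which $\limsup_{n}\frac{1}{n}\log H(n,\varepsilon_M)\geq M$. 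The compactness of $(\mathcal{X},d)$ established above is what makes this Bowen--Dinaburg quantity well defined and independent of the auxiliary choices, so I would take it as granted.

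First I would describe the separated families. Fix an initial internal state $x$ and consider points $(x,m)$ and $(x,\check{m})$ that differ only in their message component. Because one iterate of $G_g$ consumes the current first block through $g=\mathcal{E}_k\circ F_{f_0}$ and then shifts, and because for the vectorial negation $g(m,x)=\mathcal{E}_k(x\oplus m)$ is a bijection of $\llbracket 0,2^\mathsf{N}-1\rrbracket$ in its first argument, two messages that disagree on the block currently being processed produce, at the next iterate, internal states that differ in at least one bit. Such a discrepancy contributes a full unit to the \emph{integer} part $d_e$ of the distance, hence a separation of magnitude at least $1$ that does not depend on $\varepsilon$. I would use this amplification to turn block disagreements occurring among the first $n$ processed blocks into robust $(n,1)$-separations, and then refine: by additionally distinguishing blocks through the \emph{decimal} part $d_m$ at finer scales, decreasing $\varepsilon$ lets one resolve disagreements located deeper in the message tail, so that more and more independent coordinates become available per unit of time. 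Along the first $n$ iterates the shift brings $n$ successive blocks to the front, each choosable among $2^\mathsf{N}$ values, which already gives $H(n,\varepsilon)\geq (2^\mathsf{N})^{n}$; to reach an unbounded rate I would, for a target $M$, pick $\varepsilon_M$ small enough that $d_m$ resolves a further window of $r(\varepsilon_M)$ blocks at each iterate with $r(\varepsilon_M)\to\infty$, invoking the mixing lemma $G_g^{\circ n}(\mathcal{B})=\mathcal{X}$ to guarantee that these deeper coordinates are genuinely realized by orbits issued from an arbitrarily small ball. This should yield $H(n,\varepsilon_M)\geq a(\varepsilon_M)^{\,n}$ with $a(\varepsilon_M)\to\infty$, and therefore $\limsup_{n}\frac{1}{n}\log H(n,\varepsilon_M)\geq \log a(\varepsilon_M)\geq M$ for $\varepsilon_M$ small.

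The main obstacle, which I would treat with the most care, is precisely this last point: making rigorous that lowering $\varepsilon$ increases the per-iterate branching \emph{without bound}, rather than merely enlarging a fixed full-shift alphabet by a bounded factor. Concretely, one must show that the number of orbit segments that remain pairwise $\varepsilon$-separated up to time $n$ grows fast enough as $\varepsilon\to 0$ that the $\limsup_{n}$ survives the subsequent $\varepsilon\to 0$ limit, and this forces a delicate quantitative control of the interaction between the bounded integer part $d_e$ and the summable decimal part $d_m$ inside the combined distance $d=d_e+d_m$. The topological mixing lemma recalled above is the structural input that feeds the separated-set construction, since it certifies that no fiber of the message dynamics collapses the internal-state discrepancies one is trying to count; establishing the uniform divergence of the growth rate is the heart of the proof, and is where I expect essentially all the difficulty to concentrate.
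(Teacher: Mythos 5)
Your proposal has a genuine gap, and it sits exactly where you yourself flag it. The pivotal claim --- that shrinking $\varepsilon$ yields a per-iterate branching factor $a(\varepsilon)\to\infty$ --- cannot be made rigorous in this system. In the metric $d=d_e+d_m$, taking $\varepsilon\sim 10^{-k}$ makes the decimal part resolve only the first $k$ blocks of the current message tail; over $n$ iterates, the Bowen metric $d_n$ therefore distinguishes at most the internal-state trajectory together with the first $n+k$ message blocks. Consequently $H(n,\varepsilon)\leq 2^{\mathsf{N}(n+k+1)}$ (up to a factor $2$ coming from the degenerate pairs with everywhere-complementary tails), so $\limsup_{n}\frac{1}{n}\log H(n,\varepsilon)\leq \mathsf{N}\log 2$ for \emph{every} fixed $\varepsilon>0$: the extra window of $r(\varepsilon)\approx k$ blocks is an additive constant in $n$, not a multiplicative per-iterate gain, because the shift consumes exactly one block of $\mathsf{N}$ bits per iterate and the alphabet $\llbracket 0,2^{\mathsf{N}}-1\rrbracket$ is fixed and finite. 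The appeal to the mixing lemma does not repair this: mixing certifies reachability from small balls, not additional $\varepsilon$-separation per unit of time. So no choice of $\varepsilon_M$ can push the rate above $\mathsf{N}\log 2$, let alone above an arbitrary $M$, and your construction stalls at the finite rate of the underlying full shift --- your own lower bound $H(n,\varepsilon)\geq (2^{\mathsf{N}})^{n}$ is essentially sharp.

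The paper's own proof takes a much blunter route and does not attempt your multi-scale refinement: it fixes the separation scale at $\varepsilon=1$, notes that two points with distinct internal states are at distance at least $1$, and then infers from the infinite cardinality of $\mathcal{S}_\mathsf{N}$ that $H(n,1)\geq e^{n^{2}}$, giving rate $+\infty$ already at scale $1$. You should be aware, however, that your accounting exposes precisely what that inference glosses over: mere infinitude of the message space does not yield large $(n,1)$-separated families, since points sharing an internal-state trajectory over the first $n$ iterates and differing only in blocks beyond the resolved horizon are $d_n$-close, and only $2^{\mathsf{N}(n+1)}$ state trajectories exist. In other words, the obstruction you correctly identified as ``the heart of the proof'' is not an artifact of your particular strategy; it is a scale-by-scale obstruction inherent to the metric, under which the separated-set count grows at most exponentially in $n$ with a rate independent of $\varepsilon$, and the superexponential count invoked by the paper is not justified by the argument as stated.
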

\begin{proof}
Let $\textrm{x}, \textrm{\v{x}}\in \mathbb{B}^\mathsf{N}$ such that $\exists i_0 \in \llbracket 1, N \rrbracket, \textrm{x}_{i_0} \neq \textrm{\v{x}}_{i_0}$. Then, $\forall \textrm{w}, \textrm{\v{w}} \in \mathcal{S}_\mathsf{N}$,
$$d((\textrm{x},\textrm{w});(\textrm{\v{x}},\textrm{\v{w}})) \geqslant 1$$
But the cardinal $c$ of $\mathcal{S}_\mathsf{N}$ is infinite, then $\forall n \in \mathbb{N}, c >e^{n^2}$.

So for all $n \in \mathbb{N}$, the maximal number $H(n,1)$ of $(n,1)-$separated points is greater than or equal to $e^{n^2}$, and then
$$
\begin{array}{ll}
h_{top}(G_g,1) & = \overline{lim} \frac{1}{n} log \left( H(n,1)\right) > \overline{lim} \frac{1}{n} log \left( e^{n^2} \right) \\
&= \overline{lim} ~(n) = + \infty.
\end{array}$$
\noindent But $h_{top}(G_g,\varepsilon)$ is an increasing function when  $\varepsilon$ is decreasing, then
$$
\begin{array}{ll}
h_{top} \left( G_g \right) &= \lim_{\varepsilon \rightarrow 0} h_{top}(G_g,\varepsilon) \\
&> h_{top}(G_g,1) = + \infty,
\end{array}$$
\noindent which concludes the evaluation of the topological entropy of $G_g$.
\end{proof}

In conclusion, all of these properties lead to a complete unpredictable behavior for the CBC mode of operation.
\section{Conclusion}
In this paper, our goal was to summarize our numerous results that have been detailed in our previous series of articles. Hence, the interest of our work is not to provide a collection of secure and complex CBC, but to initiate a complementary approach for studying such modes of operation. Our intention is to show how to model such modes, and that it is possible to study the complexity of their dynamics. Up-to-date block ciphers and modes of operation, together with topological analyses using most recent developments in this field, need to be investigated, while the interest of each topological property of complexity must be related to desired objectives for each mode of operation.
 




%
\bibliographystyle{unsrt}
\bibliography{biblio}
\end{document}